\newcommand{\pmin}{\textrm{pmin}}
\newcommand{\pmax}{\textrm{pmax}}
\newtheorem{observation}{Observation}{\bfseries}{\itshape}
\newcommand{\rank}{{\textit{rank}}}
\newcommand{\select}{{\textit{select}}}
\newcommand{\ff}{\textsc{firstfit}}
\begin{document}

\markboth{P.\ Burcsi, F.\ Cicalese, G.\ Fici, Zs.\ Lipt\'ak}
{Algorithms for Jumbled Pattern Matching in Strings}

\title{Algorithms for Jumbled Pattern Matching in Strings}

\author{P{\'e}ter Burcsi\inst{1} \and Ferdinando Cicalese\inst{2} 
\and Gabriele Fici\inst{3} \and Zsuzsanna Lipt{\'a}k\inst{4}}

\institute{Department of Computer Algebra,
E\"otv\"os Lor\'and University,
Hungary
\email{bupe@compalg.inf.elte.hu}
\and Dipartimento di Informatica ed Applicazioni, University of Salerno, Italy 
\email{cicalese@dia.unisa.it} 
\and I3S, UMR6070, CNRS et Universit\'{e} de Nice-Sophia Antipolis, France
\email{fici@i3s.unice.fr}
\and AG Genominformatik, Technische Fakult\"at, Bielefeld University, Germany 
\email{zsuzsa@cebitec.uni-bielefeld.de}
}

\maketitle

\begin{center}
\fbox{
\begin{minipage}{12cm}
{\em Electronic version of an article accepted for publication in the International Journal of Foundations of Computer Science (IJFCS),  to appear in  2011
\copyright World Scientific Publishing Company,} \url{www.wordscinet.com/ijfcs/}
\end{minipage}
}
\end{center}

\begin{abstract}
The Parikh vector $p(s)$ of a string $s$ over a finite ordered alphabet $\Sigma = \{a_{1},\ldots,a_{\sigma}\}$ is defined
as the  vector of multiplicities of the characters, $p(s) = (p_{1},\ldots,p_{\sigma})$, where $p_{i}=|\{j \mid
s_{j} = a_{i}\}|.$  Parikh vector $q$ occurs in $s$ if $s$ has a substring $t$ with $p(t)=q$. The problem of searching
for a query $q$ in a text $s$ of length $n$ can be solved simply and worst-case optimally with a sliding window approach
in $O(n)$ time. We present two novel algorithms for the case where the text is fixed and many queries arrive over time.

The first algorithm only {\em decides} whether a given Parikh vector appears in a binary text. It uses a linear size
data structure and decides each query in $O(1)$ time.  The preprocessing can be done trivially in $\Theta(n^2)$
time.

The second algorithm finds all  occurrences of a given Parikh vector in a text over an arbitrary alphabet of size $\sigma
\geq 2$ and has sub-linear expected time complexity. More precisely, we present two variants of the algorithm, both
using an $O(n)$ size data structure, each of which can be constructed in $O(n)$ time. The first solution is very simple
and easy to implement and leads to an expected query time of $O(n(\frac{\sigma}{\log\sigma})^{1/2}\frac{\log
m}{\sqrt{m}})$, where $m=\sum_i q_i$ is the length of a string with Parikh vector $q$. The second uses wavelet trees and
improves the expected runtime to 
$O(n(\frac{\sigma}{\log\sigma})^{1/2}\frac{1}{\sqrt{m}})$, i.e., by a factor of $\log m$. 
\end{abstract}

\begin{center}{\bf Keywords:} Parikh vectors, permuted strings, pattern matching, string algorithms, average case analysis, text indexing, non-standard string matching
\end{center}

\newpage

\section{Introduction}

Parikh vectors of strings count the multiplicity of the characters. They have been reintroduced many times by
different names (compomer~\cite{Boecker07}, composition~\cite{Benson03}, Parikh vector~\cite{Salomaa03}, 
permuted string~\cite{ButmanEL04}, permuted pattern~\cite{EresLP04}, and others). They are natural objects to
study, due to their numerous applications; for instance, in
computational biology, they have been applied in alignment algorithms~\cite{Benson03}, SNP discovery~\cite{Boecker07},
repeated pattern discovery~\cite{EresLP04}, and, most naturally, in interpretation of mass spectrometry
data~\cite{boecker04sequencing}. Parikh vectors can be seen as a generalization of strings, where we view two strings
as equivalent if one can be turned into the other by permuting its characters; in other words, if the two strings have
the same Parikh vector.

The problem we are interested in here is answering the question whether a query Parikh vector $q$ appears in a given
text $s$ (decision version), or where it occurs (occurrence version). An occurrence of $q$ is defined as an
occurrence of a substring $t$ of $s$ with Parikh vector $q$. The problem can be viewed as an approximate pattern
matching problem: We are looking for an occurrence of a jumbled version of a query string $t$, i.e.\ for the occurrence
of a substring $t'$ which has the same Parikh vector. In the following, let $n$ be the length of the text $s$, $m$
the length of the query $q$ (defined as the length of a string $t$ with Parikh vector $q$), and
$\sigma$ the size of the alphabet.

The above problem (both decision and occurrence versions) can be solved  with a simple sliding window based
algorithm, in $O(n)$ time and $O(\sigma)$ additional storage space. This is worst case optimal with respect to the case 
of one query.  However, when we expect to search for many queries in the same string, the above approach 
 leads to $O(Kn)$ runtime for $K$ queries.  To the best of our knowledge, no faster approach is known. This is
in stark contrast to the classical exact pattern matching problem, where all {\em exact} occurrences of a query
pattern of length $m$ are sought in a text of length $n$. 
In that case, for one query, any naive approach leads to $O(nm)$ runtime, while quite involved
ideas for preprocessing and searching are necessary to achieve an improved runtime of $O(n+m)$, as do the 
Knuth-Morris-Pratt \cite{KnuthMP77}, Boyer-Moore \cite{BoyerM77} and Boyer-Moore-type algorithms (see, e.g., 
\cite{ApostolicoG86,Horspool80}). However, when many queries
are expected, the text can be preprocessed to produce a data structure of size linear in $n$, such as a suffix tree,
suffix array, or suffix automaton, which then allows
to answer individual queries in time linear in the length of the pattern (see any textbook on string algorithms, 
e.g.~\cite{Smyth,Lothaire3}).

\subsection{Related work}

Jumbled pattern matching is a special case of approximate pattern matching. It has been used as a filtering step
in approximate pattern matching algorithms~\cite{JTU96}, but rarely considered in its own right.

The authors of~\cite{ButmanEL04} present an algorithm for finding all occurrences of a Parikh vector
in a runlength encoded text. The algorithm's time complexity is $O(n' + \sigma)$, where $n'$ is the length of the
runlength encoding of $s$. Obviously, if the string is not runlength encoded, a preprocessing phase of time $O(n)$ 
has to be added. However, this may still be feasible if many queries are expected. To the best of our knowledge, 
this is the only algorithm that has been presented for the problem we treat here.

An efficient algorithm for computing all Parikh fingerprints of substrings of a given string was developed
in~\cite{AmirALS03}. Parikh fingerprints are Boolean vectors where the $k$'th entry is $1$ if and only if $a_k$ appears
in the string. The algorithm involves storing a data point for each Parikh fingerprint, of which there are at most
$O(n\sigma)$ many. This approach was adapted in~\cite{EresLP04} for Parikh vectors and applied to identifying all
repeated Parikh vectors within a given length range; using it to search for queries of arbitrary length would imply
using $\Omega(P(s))$ space, where $P(s)$ denotes the number of different Parikh vectors of substrings of $s$. This is
not desirable, since, for arbitrary alphabets, there are non-trivial strings of any length with quadratic $P(s)$~\cite{CELSW04}.

\subsection{Results}

In this paper, we present two novel algorithms which perform significantly better than the simple window algorithm, in
the case where many queries arrive. 

For the binary case, we present an algorithm which answers {\em decision queries} in $O(1)$
time, using a data structure of size $O(n)$ (Interval Algorithm, Sect.~\ref{sec:binary}). 
The data structure  is  constructed  in $\Theta(n^2)$ time.

For general alphabets, we present an algorithm with expected sublinear runtime 
which uses $O(n)$ space to answer {\em occurrence queries} (Jumping Algorithm, Sect.~\ref{sec:jumping}). 
We present two different variants of the algorithm. The first one uses a very simple
data structure (an inverted table) and answers queries in time $O(\sigma J \log (\frac{n}{J} + m)),$ where $J$ denotes
the number of iterations of the main loop of  the algorithm. We then show that the expected value of $J$ for the case of
random strings and patterns is $O(\frac{n}{\sqrt{m}\sqrt{\sigma\log\sigma}})$, 
yielding an expected runtime of $O(n( \frac{\sigma}{\log\sigma})^{1/2}\frac{\log m}{\sqrt{m}})$, per query

The second variant of the algorithm uses wavelet trees~\cite{GrossiGV03} and has query time $O(\sigma J)$,
yielding an overall expected runtime of $O(n( \frac{\sigma}{\log\sigma})^{1/2}\frac{1}{\sqrt{m}})$, per query.
(Here and in the following,
$\log$ stands for logarithm base $2$.)
 
Our simulations on random strings and real biological strings confirm the 
sublinear behavior of the algorithms  in practice. This is a significant improvement over the simple window algorithm
w.r.t.\ expected runtime, both for a single query and repeated queries over one string.


The Jumping Algorithm is reminiscent of the Boyer-Moore-like approaches  to the classical exact string matching 
problem~\cite{BoyerM77,ApostolicoG86,Horspool80}. This  analogy is used both in its presentation and 
in the analysis of the number of iterations performed by the algorithm.

\section{Definitions and problem statement}

Given a finite ordered alphabet $\Sigma = \{a_1,\ldots,a_{\sigma}\}, a_1\leq \ldots\leq a_{\sigma}$. For a string $s\in
\Sigma^*$, $s=s_1\ldots s_n$, the {\em Parikh vector} $p(s) = (p_1,\ldots,p_{\sigma})$ of $s$ defines the multiplicities of the characters in
$s$, i.e.\ $p_i = |\{ j \mid s_j = a_i\}|$, for $i=1,\ldots,
\sigma$. For a Parikh vector $p$, the {\em length} $|p|$ denotes the length of a string with Parikh vector $p$, i.e.\ $|p| = \sum_i p_i$. 
An {\em occurrence} of a Parikh vector $p$ in $s$ is an occurrence of a substring $t$ with $p(t)=p$. (An occurrence of
$t$ is a pair of positions $0\leq i\leq j\leq n$, such that $s_i\ldots s_j=t$.) 
A Parikh vector that occurs in $s$ is called a sub-Parikh vector of $s$. The prefix of length
$i$ is denoted $pr(i) = pr(i,s) = s_1\ldots s_i$, and the Parikh vector of $pr(i)$ as $prv(i) = prv(i,s) = p(pr(i))$. 

For two Parikh vectors $p,q\in {\mathbb N}^{\sigma}$, we
define $p \leq q$ and $p+q$ component-wise: $p\leq q$ if and only if $p_i \leq q_i$ for all $i=1,\ldots, \sigma$, and 
$p+q = u$ where $u_i = p_i + q_i$ for $i=1,\ldots,\sigma$. Similarly, for $p\leq q$, we set $q-p = v$ where 
$v_i = q_i - p_i$ for $i=1,\ldots,\sigma$.

\begin{quote}{\bf Jumbled Pattern Matching (JPM).} Let $s\in \Sigma^*$ be given, $|s|=n$. For a Parikh vector 
$q\in {\mathbb N}^{\sigma}$ (the query), $|q| = m$, find all occurrences of $q$ in $s$. The {\em decision version} 
of the problem is where we only want to know whether $q$ occurs in $s$.
\end{quote}

We assume that $K$ many queries arrive over time, so some preprocessing may be worthwhile. 

\medskip

Note that for $K=1$, both the decision version and the occurrence version can be solved worst-case optimally with a
simple window
algorithm, which moves a fixed size window of size $m$ along string $s$. Maintain the Parikh vector $c$ of the 
current window and a counter $r$ which counts indices $i$ such that $c_i \neq q_i$. Each sliding step
costs either 0 or 2 update operations of $c$, and possibly one increment or decrement of $r$.
This algorithm solves both the decision and occurrence problems and has running time $\Theta(n)$, using additional 
storage space $\Theta(\sigma)$.

Precomputing, sorting, and storing all sub-Parikh vectors of $s$ would lead to $\Theta(n^2)$ storage space, since
there are non-trivial strings with a quadratic number of Parikh vectors over arbitrary alphabets~\cite{CELSW04}. Such
space usage is inacceptable in many applications.

For small queries, the problem can be solved exhaustively with a linear size indexing structure such as a
suffix tree, which can be searched down to length $m=|q|$ (of the substrings), yielding a solution to the decision problem in
time $O(\sigma^m)$. For finding occurrences, report all leaves in the subtrees below each match; this costs $O(M)$ time,
where $M$ is the number of occurrences of $q$ in $s$. Constructing the suffix tree
takes $O(n)$ time, so for $m = o(\log n)$, we get a total runtime of $O(n)$, since $M \leq n$ for any query $q$.


\section{Decision problem in the binary case}\label{sec:binary}

In this section, we present an algorithm for strings over a binary alphabet which, once a data structure of size
$O(n)$ has been constructed, answers decision queries in constant time. It makes use of the following nice property of
binary strings. 

\begin{lemma}\label{lemma:continuous}
Let $s\in \{a,b\}^*$ with $|s|=n$. Fix $1\leq m\leq n$. 
If the Parikh vectors $(x_1,m-x_1)$ and $(x_2,m-x_2)$ both occur in $s$, then so does $(y,m-y)$ for any $x_1\leq y \leq
x_2$. 
\end{lemma}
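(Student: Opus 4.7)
The plan is to invoke a discrete intermediate value argument on the sliding window. Since the alphabet is binary, any substring of length $m$ is fully described by the number of $a$'s it contains, so the question reduces to showing that as a window of length $m$ slides across $s$, the $a$-count hits every integer between its minimum and maximum values.

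Concretely, I would define $f\colon\{1,\dots,n-m+1\} \to \mathbb{N}$ by letting $f(i)$ be the number of occurrences of $a$ in the substring $s_i s_{i+1}\cdots s_{i+m-1}$. The key observation, which I would state and justify in one sentence, is that $|f(i+1)-f(i)| \le 1$ for every valid $i$: passing from window $i$ to window $i+1$ removes the single character $s_i$ and inserts the single character $s_{i+m}$, so the count of $a$'s changes by at most one.

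Now the hypothesis that $(x_1,m-x_1)$ and $(x_2,m-x_2)$ both occur in $s$ means there exist indices $i_1,i_2$ with $f(i_1)=x_1$ and $f(i_2)=x_2$; assume without loss of generality that $i_1 \le i_2$. Given any integer $y$ with $x_1 \le y \le x_2$, I would apply the discrete intermediate value theorem to the integer-valued sequence $f(i_1), f(i_1+1), \dots, f(i_2)$: since consecutive terms differ by at most $1$ and the sequence goes from $x_1$ to $x_2$, it must take the value $y$ at some index $i^\ast \in [i_1,i_2]$. The window starting at $i^\ast$ then has Parikh vector $(y, m-y)$, which is exactly the desired occurrence.

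There is no real obstacle here; the only subtlety is making sure the discrete intermediate value step is stated carefully (it is a one-line induction on $i-i_1$, splitting on whether $f$ is currently below, equal to, or above $y$). The proof is essentially a Lipschitz-continuity argument for the window-count function, and it uses the binary hypothesis only to conclude that a matching count yields a matching Parikh vector.
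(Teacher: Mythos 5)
Your proposal is correct and is essentially the same argument as the paper's: the paper also slides a fixed-size window of length $m$ and observes that each one-step shift changes the Parikh vector by at most one in each coordinate, concluding that the attainable $a$-counts form a contiguous interval. Your version merely makes the discrete intermediate value step more explicit.
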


\begin{proof}
Consider a sliding window of fixed size $m$ moving along the string and let $(p_1,p_2)$ be the Parikh vector of the
current substring. When the window is shifted by one, the Parikh vector either remains unchanged (if the character
falling out is the same as the character coming in), or it becomes $(p_1+1,p_2 -1)$ resp.\ $(p_1-1,p_2 +1)$ (if they are
different). Thus the Parikh vectors of substrings of $s$ of length $m$ build a set of the form $\{(x,m-x) \mid
x=\pmin(m),\pmin(m)+1,\ldots, \pmax(m)\}$ for appropriate $\pmin(m)$ and $\pmax(m)$. \hfill \qed
\end{proof}

Assume that the algorithm has access to the values $\pmin(m)$ and $\pmax(m)$ for $m=1,\ldots,n$; then, when a query
$q = (x,y)$ arrives, it answers {\sc yes} if and only if $x \in [\pmin(x+y), \pmax(x+y)]$. The query time is $O(1)$. 

The table of the values $\pmin(m)$ and $\pmax(m)$ can be easily computed in a preprocessing step in time $\Theta(n^2)$
by scanning the string with a window of size $m$, for each $m$. Alternatively, lazy computation of the table is
feasible, since for any query $q$, only the entry $m=|q|$ is necessary. Therefore, it can be computed on the fly as
queries arrive. Then, any query will take time $O(1)$ (if the appropriate entry has already been computed), or $O(n)$
(if it has not). After $n$ queries of the latter kind, the table is completed, and all subsequent queries can be
answered in $O(1)$ time. If we assume that the query lengths are uniformly distributed, then this can be viewed as a
coupon collector problem where the coupon collector has to collect one copy of each length $m$. Then the expected
number of queries needed before having seen all $n$ coupons is $nH_n \approx n\ln n$ (see e.g.~\cite{Feller}). The
algorithm will have taken $O(n^2)$ time to answer these $n\ln n$ queries.

The assumption of the uniform length distribution may not be very realistic; however, even if it does not hold, we never
take more time than $O(n^2 + K)$ for $K$ many queries. Since any one query may take at most $O(n)$ time, our algorithm
never performs worse than the simple window algorithm. Moreover, for those queries where the table entries have to be computed, we can even run the
simple window
algorithm itself and report all occurrences, as well. For all others, we only give decision answers, but in constant
time. 

The size of the data structure is $2n = O(n)$. The overall running time for either variant is $\Theta(K+n^2)$. As soon as
the number of queries is $K=\omega(n)$, 
both variants outperform the simple window algorithm, whose running time is $\Theta(Kn)$.

\medskip

\begin{example} Let  $s = ababbaabaabbbaaabbab$. In Table~\ref{tab:val}, we give the table of $\pmin$ and
$\pmax$ for $s$. This example shows that the locality of \pmin\ and \pmax\ is preserved only in adjacent levels. As an
example, the value $\pmax(3)=3$ corresponds to the substring $aaa$ appearing only at position $14$, while $\pmax(5)=4$
corresponds to the substring $aabaa$ appearing only at position $6$.

\begin{table}[h]
\centering  \caption{An example of the linear data structure for answering queries in constant time. \label{tab:val}}
\begin{small}
\begin{raggedright}
\vspace{4mm}

\begin{tabular}{c *{30}{@{\hspace{2.4mm}}r}}
 $m$  & 1\hspace{1ex} & 2\hspace{1ex} & 3\hspace{1ex} & 4\hspace{1ex} & 5\hspace{1ex} & 6\hspace{1ex} & 7\hspace{1ex} &
8\hspace{1ex} & 9\hspace{1ex} & 10 & 11 & 12 & 13 & 14 & 15 & 16 & 17 & 18 & 19 & 20 \\
\hline \rule[-6pt]{0pt}{22pt}
$\pmin$   & 0& 0& 0& 1& 2& 2& 3& 3& 4& 4& 5& 5& 6& 7& 7& 8& 8& 9& 9& 10&\\
\hline \rule[-6pt]{0pt}{22pt}
$\pmax$   & 1& 2& 3& 3& 4& 4& 4& 5& 5& 6& 7& 7& 7& 8& 8& 9& 9& 9& 10& 10&  \\
\hline
\end{tabular}
\label{table:example-intervals}
\end{raggedright}
\end{small}
\end{table}

\end{example}


\section{The Jumping Algorithm}\label{sec:jumping}

In this section, we introduce our algorithm for general alphabets. We first give the main algorithm and then present
two different implementations of it. The first one, an inverted prefix table, is very easy to
understand and to implement, takes $O(n)$ space and $O(n)$ time to construct (both with constant $1$), and can replace
the string. Then we show how to use a wavelet tree of $s$ to implement our algorithm, which has the same space
requirements as the inverted table, can be constructed in $O(n)$ time, and improves
the query time by a factor of $\log m$.

\subsection{Main algorithm}

Let $s = s_1\ldots s_n \in \Sigma^*$ be given. Recall that $prv(i)$ denotes the Parikh vector of the prefix of $s$
of length $i$, for $i=0,\ldots, n$, where $prv(0) = p(\epsilon) = (0,\ldots,0)$. Consider Parikh vector $p\in {\mathbb N}^{\sigma}$, $p \neq
(0,\ldots,0)$. We make the following simple observations:

\begin{observation}\label{obs:pr}

\begin{enumerate}
\item For any $0 \leq i\leq j\leq n$, $p = prv(j) - prv(i)$ 
if and only if $p$ occurs in $s$ at position $(i+1,j)$.
\item If an occurrence of $p$ ends in position $j$, then $prv(j) \geq p$.
\end{enumerate}
\end{observation}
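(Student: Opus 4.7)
The plan is to derive both parts directly from the additivity of character counts under concatenation, which is the defining property of the Parikh vector map $p(\cdot)$.

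For part 1, I would first record the identity $p(uv) = p(u) + p(v)$ for any two strings $u,v$, which holds componentwise because each occurrence of a character $a_k$ in the concatenation $uv$ sits either in $u$ or in $v$ and is counted exactly once on each side. Applied with $u = s_1\ldots s_i$ and $v = s_{i+1}\ldots s_j$, this yields $prv(j) = prv(i) + p(s_{i+1}\ldots s_j)$, i.e., $prv(j) - prv(i) = p(s_{i+1}\ldots s_j)$. Hence $p = prv(j) - prv(i)$ if and only if $p = p(s_{i+1}\ldots s_j)$, which by the definition of occurrence given in Section~2 is the same as saying that $p$ occurs at position $(i+1,j)$ in $s$. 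The boundary case $i = j$ just says that $p = (0,\ldots,0)$ corresponds to the empty substring at position $(i+1,i)$, consistent with the stated indexing.

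Part 2 is then an immediate corollary of part 1. If $p$ has an occurrence ending at position $j$, there exists some $0 \le i \le j$ with $p$ occurring at $(i+1, j)$, and by part 1 this gives $prv(j) = prv(i) + p$. Because every component of $prv(i)$ is a nonnegative integer (it counts occurrences of a character), we have $prv(j)_k = prv(i)_k + p_k \ge p_k$ for every $k$, which is exactly $prv(j) \ge p$ in the componentwise order introduced earlier.

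I do not anticipate any real obstacle here, since the statement is essentially a rephrasing of the definition of $prv$ together with the additivity of character counting. The only points requiring mild care are the half-open indexing convention (the substring at position $(i+1,j)$ has length $j-i$, with $i=j$ giving the empty string) and making the componentwise nature of the inequality $\ge$ explicit, so that part 2 is not confused with a comparison of lengths.
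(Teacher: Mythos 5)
Your proof is correct. The paper states this observation without proof, treating it as immediate from the definitions, and your argument via the additivity identity $p(uv)=p(u)+p(v)$ (plus nonnegativity of $prv(i)$ for part~2) is exactly the routine justification the authors leave implicit.
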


The algorithm moves two pointers $L$ and $R$ along the
text, pointing at these potential positions $i$ and $j$. Instead of moving linearly, however, the pointers
are updated in jumps, alternating between updates of $R$ and $L$, in such a manner that many positions are skipped.
Moreover, because of the way we update
the pointers, after any update it suffices to check whether $R-L = |q|$ to confirm that an occurrence has been found
(cf.\ Lemma~\ref{lemma:invariants} below). 

We first need to define a function $\textsc{firstfit}$, which returns the smallest potential position where an occurence
of a Parikh vector can end. Let $p\in {\mathbb N}^{\sigma}$, then

\begin{equation*}
\textsc{firstfit}(p) := \min\{ j \mid prv(j) \geq p\},
\end{equation*}

\noindent
and set $\textsc{firstfit}(p) = \infty$ if no such $j$ exists. We use the following rules for updating the two
pointers, illustrated in Fig.~\ref{fig:jumping}.

\begin{figure}
\begin{center}
\includegraphics[scale=0.5]{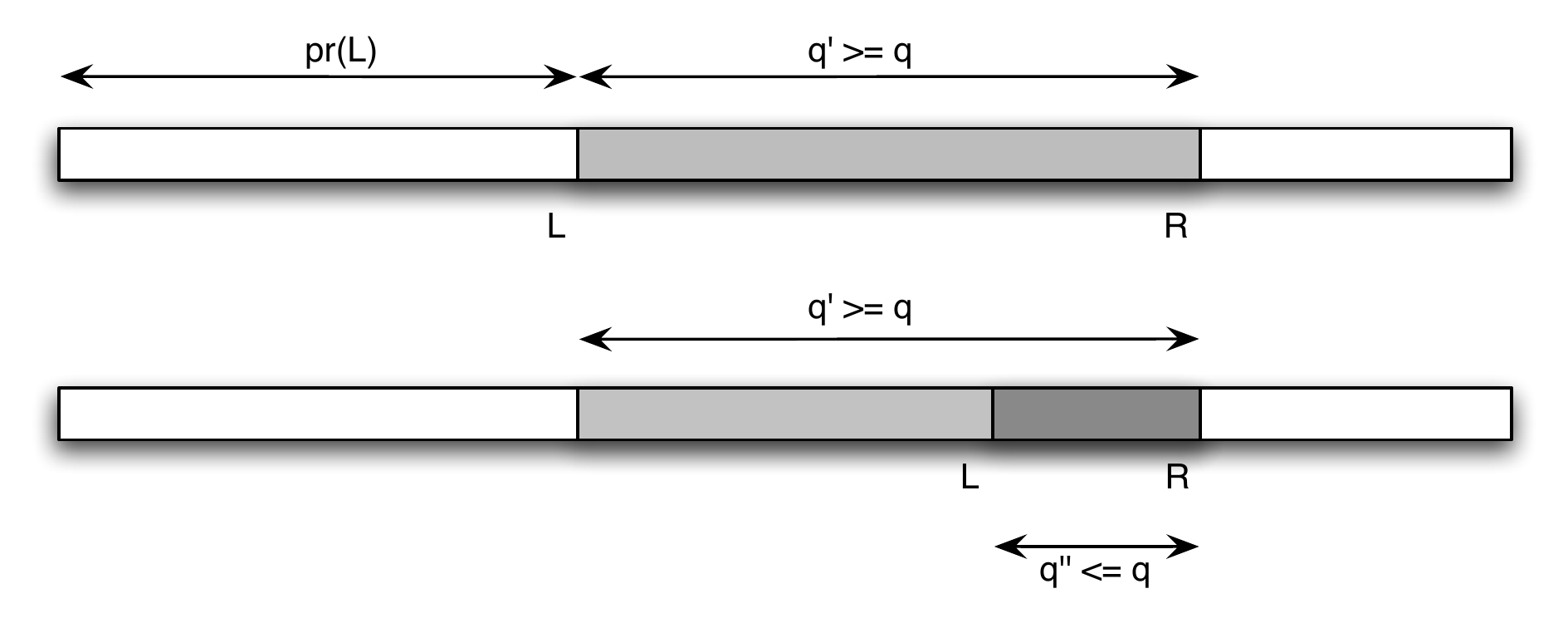}
\caption{The situation after the update of $R$ (above) and after the update of $L$ (below). $R$ is placed at the first
fit of $prv(L)+q$, thus $q'$ is a super-Parikh vector of $q$. Then $L$ is placed at the beginning of the longest
good suffix ending in $R$, so $q''$ is a sub-Parikh vector of $q$.\label{fig:jumping}}
\end{center}
\end{figure}

{\em Updating $R$:} Assume that the left pointer is pointing at position $L$, i.e.\ no unreported occurrence starts
before $L+1$.  Notice that, if there is an occurrence of $q$ ending at any position $j>L$, it must hold
that $prv(L) + q \leq prv(j)$. In other words, we must fit both $prv(L)$ and $q$ at position $j$, so we update $R$ to

\begin{equation*} R \gets \textsc{firstfit}(prv(L) + q).
\end{equation*}

{\em Updating $L$:} 
Assume that $R$ has just been updated. Thus, $prv(R) - prv(L) \geq q$ by definition of $\ff$. If equality holds, then we
have found an occurrence of $q$ in position $(L+1,R)$, and $L$ can be incremented by $1$. Otherwise $prv(R) - prv(L) > q$, 
which implies that,
interspersed between the characters that belong to $q$, there are some ``superfluous" characters. Now the first
position where an occurrence of $q$ can start is at the beginning of a {\em contiguous} sequence of characters
ending in $R$ which all belong to $q$. In other words, we need the beginning of the longest suffix of $s[L+1,R]$
with Parikh vector $\leq q$, i.e.\ the smallest position $i$ such that $prv(R) - prv(i) \leq q$, or, equivalently, 
$prv(i) \geq prv(R) - q$. Thus we update $L$ to 
 
\begin{equation*}
  L \gets \textsc{firstfit}(prv(R) - q).
\end{equation*}

Finally, in order to check whether we have found an occurrence of query $q$, after each update of $R$ or $L$, we check
whether $R-L = |q|$. In Figure~\ref{fig:jumping_pseudocode}, we give the pseudocode of the algorithm.

\begin{figure}
\begin{algorithm}{Jumping Algorithm}{
\label{algo:jumping}
\qinput{query Parikh vector $q$}
\qoutput{A set $Occ$ containing all beginning positions of occurrences of $q$ in $s$} 
}
set $m \qlet |q|; Occ \qlet \emptyset$; $L \qlet 0$;\\
\qwhile  $L < n-m$ \\
\qdo $R \qlet \textsc{firstfit}(prv(L) + q)$; \\
\qif $R-L = m$  \\ 
\qthen  add $L+1$ to $Occ$;\\
$L \qlet L+1$;  \\
\qelse $L \qlet \textsc{firstfit}(prv(R) - q)$; \\ 
\qif $R-L = m$  \\ 
\qthen 
add $L+1$ to $Occ$;\\
$L\qlet L+1$; 
\qfi 
\qfi
\qend \\ 
\qreturn $Occ$;
\end{algorithm}
\caption{Pseudocode of Jumping Algorithm\label{fig:jumping_pseudocode}}
\end{figure}

\medskip

It remains to see how to compute the $\textsc{firstfit}$ and $prv$ functions. We first prove that the algorithm is
correct. For this, we will need the following lemma.

\begin{lemma}\label{lemma:invariants}
The following algorithm invariants hold:

\begin{enumerate}
\item After each update of $R$, we have $prv(R) - prv(L) \geq q$.
\item After each update of $L$, we have $prv(R) - prv(L) \leq q$.
\item $L \leq R$.
\end{enumerate}
\end{lemma}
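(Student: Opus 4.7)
The plan is to proceed by induction on the number of pointer updates performed by the algorithm, verifying each invariant immediately after the corresponding update. All three statements follow almost directly from the definition of $\ff$ together with the way the two pointers are moved, so the work is mostly bookkeeping through the cases.

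For item (1), after an update of $R$ we have $R = \ff(prv(L)+q)$, and by the very definition of $\ff$ this means $prv(R)\geq prv(L)+q$, i.e.\ $prv(R)-prv(L)\geq q$. For item (2), there are two ways $L$ can move. If $L$ is reassigned by $L\gets \ff(prv(R)-q)$, the definition of $\ff$ yields $prv(L)\geq prv(R)-q$, so $prv(R)-prv(L)\leq q$. The slightly more delicate case is $L\gets L+1$ after an occurrence has been reported: at that moment $prv(R)-prv(L_{\mathrm{old}})=q$ (since $R-L_{\mathrm{old}}=m$ together with item (1) forces equality, using that a Parikh vector is determined by its component sum in each coordinate under $\leq$), hence $prv(R)-prv(L_{\mathrm{old}}+1)=q-e_{s_{L_{\mathrm{old}}+1}}\leq q$, where $e_a$ denotes the unit vector for character $a$.

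For item (3) I would again argue per update. When $R$ is updated, any index $j\leq L$ satisfies $prv(j)\leq prv(L)$, while the $\ff$-target requires $prv(j)\geq prv(L)+q$; since $|q|=m\geq 1$ this is impossible, so $R=\ff(prv(L)+q)>L$. When $L$ is updated by $L\gets L+1$ after an occurrence, we have $R-L_{\mathrm{old}}=m\geq 1$, hence $L_{\mathrm{old}}+1\leq R$. When $L$ is reassigned to $\ff(prv(R)-q)$, the index $R$ itself satisfies $prv(R)\geq prv(R)-q$, so the minimum in the definition of $\ff$ is at most $R$.

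I expect no real obstacle in this proof; the only mildly subtle point is the case $L\gets L+1$ after reporting an occurrence, where one has to argue that item (1) combined with $R-L=m$ forces $prv(R)-prv(L)=q$ exactly (so that subtracting a unit vector keeps the difference $\leq q$). Everything else reduces to unwinding the definition of $\ff$.
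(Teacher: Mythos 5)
Your proof is correct and follows essentially the same route as the paper's: each invariant is unwound from the definition of $\ff$, with a case split on whether $L$ was advanced via $\ff$ or incremented after a reported occurrence (you even supply the small justifications the paper leaves implicit, namely that equal component sums plus a one-sided componentwise inequality force $prv(R)-prv(L)=q$, and that the $R$-update also preserves $L\leq R$). The only nitpick is that an occurrence can also be detected right after the $\ff$-update of $L$, in which case it is invariant (2) for that update, not invariant (1), that combines with $R-L=m$ to force equality; the identical equal-sums argument applies, so nothing breaks.
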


\begin{proof}
{\em 1.} follows directly from the definition of $\textsc{firstfit}$ and the update rule for $R$. For {\em 2.}, if an occurrence was
found at $(i,j)$, then before the update we have $L=i-1$ and $R=j$. Now $L$ is incremented by $1$, so $L=i$ and $prv(R) -
prv(L)
= q - e_{s_i} < q$, where $e_k$ is the $k$'th unity vector. Otherwise, $L \gets \textsc{firstfit}(prv(R) - q)$, and again the claim
follows directly from the definition of $\textsc{firstfit}$. For {\em 3.}, if an occurrence was found, then $L$ is incremented by $1$,
and $R-L = m-1\geq 0$. Otherwise, $L = \textsc{firstfit}(prv(R) - q) = \min \{ \ell \mid prv(\ell) \geq prv(R) - q\} \leq R$. \hfill
\qed
\end{proof}

\begin{theorem}\label{thm:jumping_correctness}
Algorithm Jumping Algorithm is correct.
\end{theorem}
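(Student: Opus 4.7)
My plan is to split the proof into soundness, completeness, and termination, leveraging the invariants of Lemma~\ref{lemma:invariants} and Observation~\ref{obs:pr}.

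For soundness I would observe that $L+1$ is added to $Occ$ only immediately after an update of $R$ or of $L$ in which the test $R - L = m$ succeeds. After each such update, Lemma~\ref{lemma:invariants} gives either $prv(R) - prv(L) \geq q$ or $prv(R) - prv(L) \leq q$ componentwise; summing the $\sigma$ components and using $R - L = m = |q|$ forces componentwise equality $prv(R) - prv(L) = q$. By Observation~\ref{obs:pr}(1) this is precisely an occurrence of $q$ at $(L+1, R)$, so the addition is justified, and the subsequent $L \gets L+1$ leaves the invariants available for the next iteration.

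For completeness I would fix an arbitrary occurrence $(i,j)$ of $q$ in $s$ and prove by induction on the iteration count that the invariant $L \leq i-1$ is maintained at the top of the while loop until position $i$ is added to $Occ$. The two key monotonicity steps are: after $R \gets \ff(prv(L)+q)$ one has $R \leq j$, because $prv(j) = prv(i-1) + q \geq prv(L) + q$ makes $j$ a feasible witness for the $\ff$ minimum; and after $L \gets \ff(prv(R)-q)$ one has $L \leq i-1$, because $R \leq j$ yields $prv(R)-q \leq prv(i-1)$, making $i-1$ a feasible witness. A short case analysis on the outcome of the test $R-L=m$ after each update then shows that either the invariant is preserved or an occurrence is reported at some position $L+1 \leq i$. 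Combined with the strict-monotonicity argument for termination below, $L$ must take the value $i-1$ at some iteration, whereupon $R$ is updated to $j$ and $i$ is correctly reported.

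Termination follows by showing that $L$ strictly increases in every iteration: the explicit $L \gets L+1$ branches are immediate, while in the jumping branch the failure of the test $R-L=m$ combined with $R-L \geq m$ (obtained from invariant (1) after the $R$ update) gives $R-L > m$, so the total sum $R-m$ of $prv(R)-q$ strictly exceeds $L$, forcing $\ff(prv(R)-q) > L$. The step I expect to require the most care is the completeness argument, and specifically the verification that neither firstfit jump can skip past the target pair $(i,j)$; this is exactly where the defining minimality of $\ff$ and the componentwise monotonicity of $prv$ combine.
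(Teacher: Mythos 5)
Your proof is correct and follows essentially the same route as the paper: the soundness argument (the invariants of Lemma~\ref{lemma:invariants} plus $R-L=m=|q|$ forcing componentwise equality) is identical, and your completeness induction on the invariant $L\leq i-1$ rests on exactly the two feasibility facts the paper uses ($j$ is feasible for $\ff(prv(L)+q)$, and $i-1$ is feasible for $\ff(prv(R)-q)$ because $s[i,R]$ has a sub-Parikh vector of $q$), merely recast as a direct induction instead of the paper's minimal-counterexample contradiction. As a minor bonus, you explicitly justify the strict increase of $L$ in the jumping branch via the component-sum argument, which the paper only asserts in passing.
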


\begin{proof} We have to show that (1) if the algorithm reports an occurrence, then it is correct, and (2) if there is an
occurrence, then the algorithm will find it. 

{\em (1)} If the algorithm reports an index $i$, then $(i, i+m-1)$ is an occurrence of $q$: An index $i$ is
added to $Occ$ whenever $R-L = m$. If the last update was that of $R$, then we have $prv(R) - prv(L) \geq q$ by Lemma
\ref{lemma:invariants}, and
together with $R-L = m = |q|$, this implies $prv(R) - prv(L) = q$, thus $(L+1,R) = (i,i+m-1)$ is an occurrence of $q$.
If the last update was $L$, then $prv(R) - prv(L) \leq q$, and it follows analogously that $prv(R)-prv(L) = q$.

\medskip

{\em (2)} All occurrences of $q$ are reported: Let's assume otherwise. Then
there is a minimal $i$ and $j=i+m-1$ such that $p(s[i,j])=q$ but $i$ is not reported by the algorithm. 
By Observation \ref{obs:pr}, we have $prv(j) - prv(i-1) = q$.

Let's refer to the values of $L$ and $R$ as two sequences $(L_{k})_{k=1,2,\ldots}$ and $(R_{k})_{k=1,2,\ldots}$.
So we have $L_1 = 0$, and for all $k\geq 1,$ $R_{k} = \textsc{firstfit}(prv(L_{k}) + q)$, and $L_{k +1} = L_{k}+1$ if
$R_{k} - L_{k} = m$ and $L_{k +1} = \textsc{firstfit}(prv(R_{k}) - q)$ otherwise. In particular, $L_{k+1} > L_k$ for all $k$.

First observe that if for some $k$, $L_{k} = i-1$, then $R$ will be updated to $j$ in the next step, and we are done.
This is because $R_{k} = \textsc{firstfit}(prv(L_{k}) + q) = \textsc{firstfit}(prv(i-1) + q) = \textsc{firstfit}(prv(j)) = j$. Similarly, if for some $k$,
$R_{k} = j$, then we have $L_{k +1} = i-1$.

So there must be a $k$ such that $L_{k} < i-1 < L_{k +1}$. Now look at $R_k$. Since there is an occurrence of $q$
after $L_{k}$ ending in $j$, this implies that $R_{k} = \textsc{firstfit}(prv(L_{k}) + q) \leq j$. However, we cannot have
$R_{k} =j$, so it follows that $R_{k} < j$. On the other hand,  $i-1 < L_{k +1} \leq R_{k}$ by our
assumption and by Lemma \ref{lemma:invariants}. So $R_k$ is pointing to a position somewhere between $i-1$ and $j$,
i.e.\ to a position within our occurrence of $q$. Denote the remaining part of $q$ to the right of $R_k$ by $q'$: $q'
= prv(j) - prv(R_k)$. Since $R_k = \textsc{firstfit}(prv(L_k) + q)$, all characters of $q$ must fit between $L_k$ and $R_k$, so
the Parikh vector $p = prv(i) - prv(L_k)$ is a super-Parikh vector of $q'$. If $p = q'$, then there is an occurrence of
$q$ at $(L_k + 1, R_k)$, and by minimality of $(i,j)$, this occurrence was correctly identified by the algorithm.
Thus, $L_{k+1} = L_k +1 \leq i-1$, contradicting our choice of $k$. It follows that $p > q'$ and we have to find the 
longest good
suffix of the substring ending in $R_k$ for the next update $L_{k+1}$ of $L$. But $s[i,R_k]$ is a good suffix because
its Parikh vector is a sub-Parikh vector of $q$, so $L_{k+1} = \textsc{firstfit}(prv(R_k) - q) \leq i-1$, again in contradiction to
$L_{k+1} > i-1$.\hfill \qed
\end{proof}

We illustrate the proof in Fig.~\ref{fig:proof_of_correctness}.

\begin{figure}
\begin{center}
\includegraphics[scale=0.5]{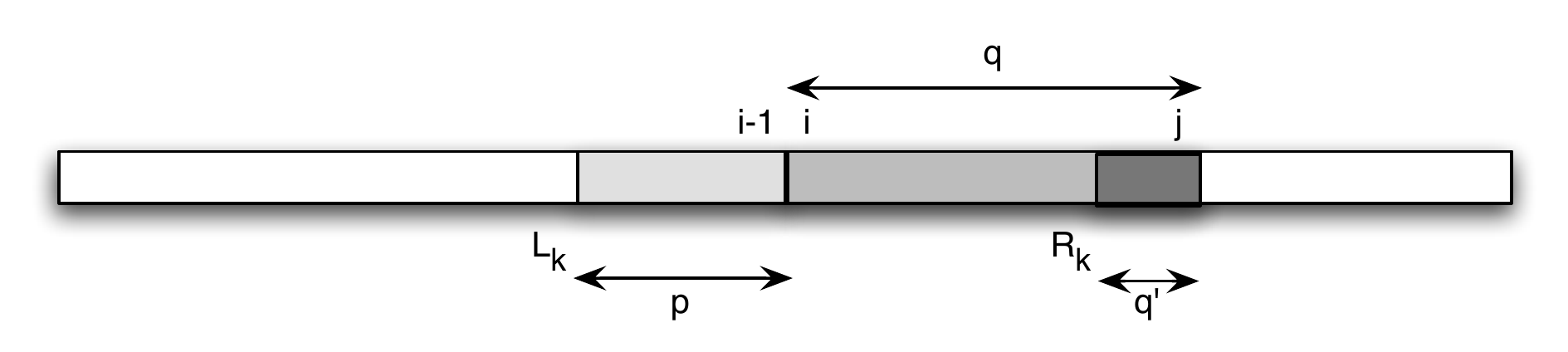}
\caption{\label{fig:proof_of_correctness} Illustration for proof of correctness.}
\end{center}
\end{figure}

\subsection{Variant using an inverted table}\label{sec:firstfit}

Storing all prefix vectors of $s$ would require $O(\sigma n)$ storage space, which may be too much. Instead, we
construct an ``inverted prefix vector table" $I$ containing the increment positions of the prefix vectors:
for each character $a_k\in \Sigma$, and each value $j$ up to $p(s)_k$, the position in $s$ of the $j$'th occurrence
of character $a_k$. Formally, $I[k][j] = \min \{ i \mid prv(i)_k \geq j \}$ for $j\geq 1$, and $I[k][0]=0$. 
Then we have 

\begin{equation*}
\textsc{firstfit}(p) = \max _{k=1,\ldots,\sigma} \{ I[k][p_k]\}.
\end{equation*}

We can also compute the prefix vectors $prv(i)$ from table $I$: For $k=1,\ldots,\sigma$,

\begin{equation*}
prv(j)_k = \max \{ i \mid I[k][i] \leq j \}
\end{equation*}

The obvious way to find these values is to do binary search for $j$ in each row of $I$. However, this would take time 
$\Theta(\sigma \log n)$; a better way is to use information already
acquired during the run of the algorithm. By Lemma \ref{lemma:invariants}, it always holds that
$L\leq R$. Thus, for computing $prv(R)_k$, it suffices to search for $R$ between $prv(L)_k$ and $prv(L)_k + (R-L)$. 
This search takes time proportional to $\log(R-L)$. Moreover, after each update of $L$, we have $L\geq R - m$, so 
when computing $prv(L)_k$, we can restrict the search for $L$ to between $prv(R)_k-m$ and $prv(R)_k$, in time $O(\log
m)$. For more details, see Section~\ref{sec:jumping_analysis}.

\medskip

Table $I$ can be computed in one pass over $s$ (where we take the liberty of identifying character
$a_k\in\Sigma$ with its index $k$). The variables $c_{k}$ count the number of occurrences of character $a_k$ seen so
far, and are initialized to $0$.

\begin{algorithm}{Construct $I$}{
\label{algo:preproc}}
\qfor $i=1$ \qto $n$ \\
$c_{s_i} = c_{s_i}+1$;\\
$I[s_i][c_{s_i}] = i;$ 
\qend
\end{algorithm}

Table $I$ requires $O(n)$ storage space (with constant 1). Moreover, the string $s$ can be discarded, so we
have zero additional storage. (Access to $s_i, 1\leq i \leq n,$ is still possible, at cost $O(\sigma \log n)$.)

\medskip

\begin{example} Let $\Sigma = \{a,b,c\}$ and $s=cabcccaaabccbaacca$. The prefix vectors of $s$ are given below. Note
that the algorithm does not actually compute these.

\bigskip
\noindent
$
\begin{array}{p{.9cm} | @{\hspace{.1cm}}*{19}{p{.53cm}}}
\text{pos.} && 1 & 2 & 3 & 4 & 5 & 6 & 7 & 8 & 9 & 10 & 11 & 12 & 13 & 14 & 15 & 16 & 17 & 18\\
\hline
$s$  && $c$ & $a$ & $b$ & $c$ & $c$ & $c$ & $a$ & $a$ & $a$ & $b$ & $c$ & $c$ & $b$ & $a$ & $a$ & $c$ & $c$ & $a$\\
\# $a$'s & 0 & 0 & 1 & 1 & 1 & 1 & 1 & 2 & 3 & 4 & 4 & 4 & 4 & 4 & 5 & 6 & 6 & 6 & 7\\
\# $b$'s & 0 & 0 & 0 & 1 & 1 & 1 & 1 & 1 & 1 & 1 & 2 & 2 & 2 & 3 & 3 & 3 & 3 & 3 & 3\\
\# $c$'s & 0 & 1 & 1 & 1 & 2 & 3 & 4 & 4 & 4 & 4 & 4 & 5 & 6 & 6 & 6 & 6 & 7 & 8 & 8\\
\end{array}
$

\bigskip

\noindent The inverted prefix table $I$:

\medskip

\noindent
$
\begin{array}{ p{.8cm} |@{\hspace{.2cm}} *{9}{p{0.63cm}}}
 & 0 & 1 & 2 & 3 & 4 & 5 & 6 & 7 & 8\\
 \hline
 a & 0 & 2 & 7 & 8 & 9 & 14 & 15 & 18 \\
 b & 0 & 3 & 10 & 13 \\
 c & 0 & 1 & 4 & 5 & 6 & 11 & 12 & 16 & 17\\
\end{array}
$

\bigskip

\noindent Query $q=(3,1,2)$ has 4 occurrences, beginning in positions $5,6,7,13$, since $(3,1,2) = prv(10) -
prv(4) = prv(11) - prv(5) = prv(12) - prv(6) = prv(18) - prv(12)$. The values of $L$ and $R$ are given below:

\bigskip

\noindent
\begin{tabular}{r|@{\hspace{.2cm}}*{7}{p{.5cm}}}
$k$, see proof of Thm.~\ref{thm:jumping_correctness} & 1 & 2 & 3 & 4 & 5 & 6 & 7\\
\hline
L & 0 & 4 & 5 & 6 & 7 & 10 & 12\\
R & 8 & 10 & 11 & 12 & 14 & 18 & 18\\
\hline
occurrence found? & -- & yes & yes & yes & -- & -- & yes\\
\end{tabular}

\end{example}

\subsection{Variant using a wavelet tree}

A wavelet tree on $s\in \Sigma^*$ allows {\em rank, select,} and {\em access} queries in time $O(\log
\sigma)$. For $a_k\in\Sigma$, $\rank_k(s,i) = |\{ j \mid s_j=a_k, j \leq i\}|$, the number of occurrences of
character $a_k$ up to and including position $i$, while $\select_k(s,i) = \min \{ j \mid \rank_k(s,j) \geq i\}$,
the position of the $i$'th occurrence of character $a_k$. When the string is clear, we just use $\rank_k(i)$ 
and $\select_k(i)$. Notice that 

\begin{itemize}
\item $prv(j) = (\rank_1(j),\ldots, \rank_{\sigma}(j))$, and 
\item for a Parikh vector $p = (p_1,\ldots, p_{\sigma})$, $\ff(p) = \max_{k=1,\ldots,\sigma}\{\select_k(p_k)\}$.
\end{itemize}

So we can use a wavelet tree of string $s$ to implement those two functions. We give a brief recap of wavelet
trees, and then explain how to implement the two functions above in $O(\sigma)$ time each.

\medskip

A wavelet tree is a complete binary tree with $\sigma = |\Sigma|$ many leaves. To each inner node, a bitstring is
associated which is defined recursively, starting from the root, in the following way. If $|\Sigma| = 1$,
then there is nothing to do (in this case, we have reached a leaf). Else split the alphabet into two roughly equal 
parts, $\Sigma_{\rm left}$ and $\Sigma_{\rm right}$. Now construct a bitstring of length $n$ from $s$ by replacing each
occurrence of a character $a$ by $0$ if $a\in\Sigma_{\rm left}$, and by $1$ if $a\in \Sigma_{\rm right}$. Let $s_{\rm
left}$ be the subsequence of $s$ consisting only of characters from $\Sigma_{\rm left}$, and $s_{\rm right}$ that
consisting only of characters from $\Sigma_{\rm right}$. Now recurse on the
left child with string $s_{\rm left}$ and alphabet $\Sigma_{\rm left}$, and on the right child with $s_{\rm right}$
and $\Sigma_{\rm right}$. An illustration is given in Fig.~\ref{fig:wavelet}.
At each inner node, in addition to the bitstring $B$, we have a data structure of size $o(|B|)$, which
allows to perform  $\rank$ and $\select$ queries on bit vectors in constant time 
(\cite{Munro96,Clark96,NavMaek07}). 

\medskip

Now, using the wavelet tree of $s$, any \rank\ or \select\ operation on $s$ takes time $O(\log \sigma)$, which would
yield $O(\sigma\log \sigma)$ time for both $prv(j)$ and $\ff(p)$. However, we can implement both in a way that they need
only $O(\sigma)$ time: In order to compute $\rank_k(j)$, the wavelet tree, which has $\log \sigma$ levels, has to be
descended from the root to leaf $k$. Since for $prv(j)$, we need all values $\rank_1(j),\ldots, \rank_{\sigma}(j)$
simultaneously, we traverse the complete tree in $O(\sigma)$ time. 

For computing $\ff(p)$, we need $\max_k\{\select_k(p_k)\}$, which can be computed bottom-up
in the following way. We define a value $x_u$ for each node $u$. If $u$ is a leaf, then $u$ corresponds to some character
$a_k\in \Sigma$; set $x_u=p_k$. For an inner node $u$, let $B_u$ be the bitstring at $u$. We define $x_u$ by 
$x_u = \max\{\select_0(B_u,x_{\rm left}),$ $\select_1(B_u,x_{\rm right})\}$, where $x_{\rm left}$ and $x_{\rm right}$ are the values already computed for the left resp.\ right child of $u$. The desired value is
equal to $x_{\rm root}$.

\medskip

\begin{example} Let $s=bbacaccabaddabccaaac$ (cp.\ Fig.~\ref{fig:wavelet}). We 
demonstrate the computation of $\ff(2,3,2,1)$ using the wavelet tree. We have $\ff(2,3,2,1)$ $= \max\{\select_a(s,2),
\select_b(s,3),
\select_c(s,2),
\select_d(s,1)\}$, where in slight
abuse of notation we put the character in the subscript instead of its number. Denote the bottom left bitstring as
$B_{a,b}$, the bottom right one as $B_{c,d}$, and the top bitstring as  $B_{a,b,c,d}$. Then
we get $\max\{\select_0(B_{a,b},2), \select_1(B_{a,b},3)\} = \max\{4,6\}=6$, and 
$\max\{\select_0(B_{c,d},2), \select_1(B_{c,d},1)\} = \max \{2,4\} = 4$. So at the next level, we compute 
$\max\{\select_0(B_{a,b,c,d},6),\select_1(B_{a,b,c,d},4)\} = \max\{9,11\} = 11$.
\end{example}

\begin{figure}
\begin{center}
\includegraphics[scale=0.5]{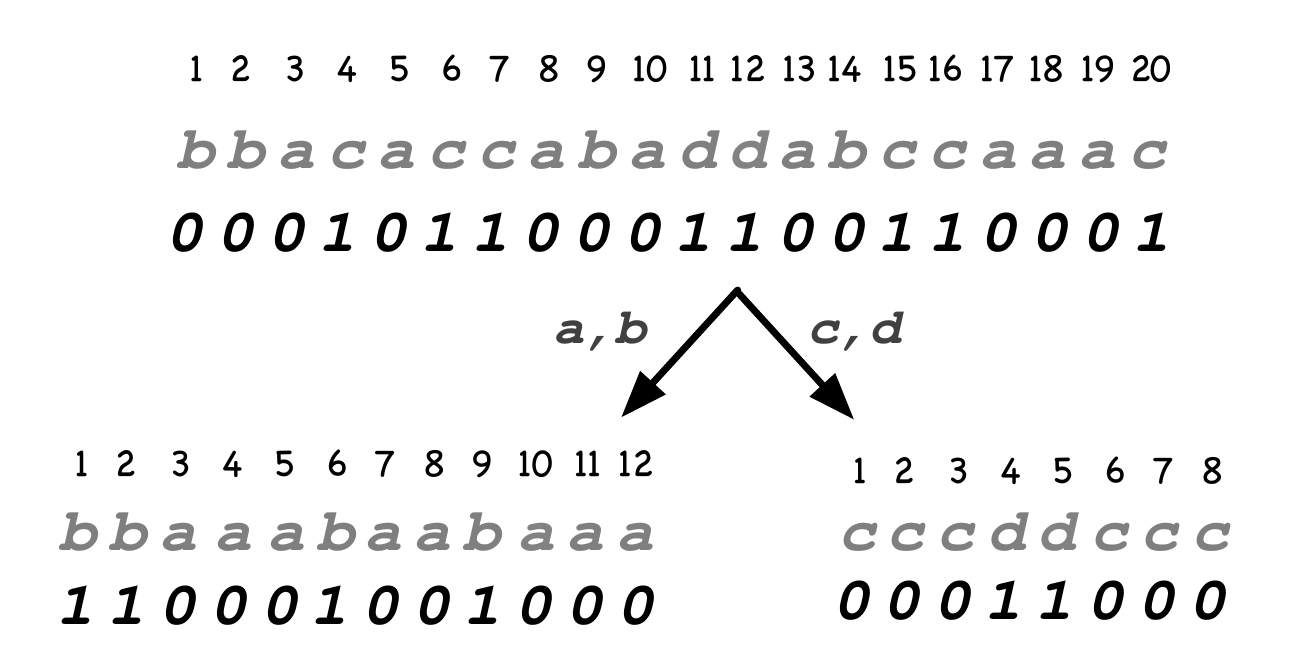}
\caption{The wavelet tree for string $bbacaccabaddabccaaac$. For clarity, the leaves have been omitted. Note
also that the third line at each inner node (the strings over the alphabet $\{a,b,c,d\}$) are only included for illustration.
\label{fig:wavelet}}
\end{center}
\end{figure}


\subsection{Algorithm Analysis}\label{sec:jumping_analysis}

Let $\mathbb{A}_1(s,q)$ denote the running time of the Jumping Algorithm using inverted tables over a text $s$ and a
Parikh vector $q$, and $\mathbb{A}_2(s,q)$ that of the Jumping Algorithm using a wavelet tree.
Further, let $J = J(s,q)$ be the number of iterations performed in the {\tt while} loop in line 2, i.e., 
the number of jumps performed by the algorithm on the input $q.$

The time spent in each iteration depends on how the functions $\ff$ and $prv$ are
implemented (lines 3 and 7). In the wavelet tree implementation, as we saw before, both take time $O(\sigma)$, so the
overall runtime of the algorithm is 

\begin{equation}\label{eq:jump_complx2}
\mathbb{A}_2(s,q) = O(\sigma J).
\end{equation}

\medskip

For the inverted table implementation, it is easy to see that computing $\textsc{firstfit}$ takes $O(\sigma)$ time.  
Now denote, for each $i=1, \dots, J,$ by $\hat{L}_i, \, \hat{R}_i$ the value
of $L$ and $R$ after the $i$'th execution of line 3 of the algorithm,
respectively.\footnote{The $\hat{L}_i$ and $\hat{R}_i$ coincide
with the $L_k$ and $R_k$ from the proof of Theorem~\ref{thm:jumping_correctness} almost but not completely: When an
occurrence
is found after the update of $L$, then the corresponding pair $L_k,R_k$ is skipped here. The reason is that now we are
only considering those updates that carry a computational cost.}
The computation of $prv(\hat{L_i})$ in line 3 takes $O(\sigma \log m)$: For each $k=1, \dots, \sigma,$ the component
$prv(\hat{L_i})_k$  can be determined by binary search over the 
list $I[k][prv(\hat{R}_{i-1})_k - m], I[k][prv(\hat{R}_{i-1})_k - m+1], \dots, I[k][prv(\hat{R}_{i-1})_k].$ By   $\hat{L_i} \geq \hat{R}_{i-1} - m,$ the claim follows.

The computation of $prv(\hat{R_i})$ in line 7 takes $O(\sigma \log (\hat{R_i} - \hat{R}_{i-1} + m)).$ 
Simply observe that in the  prefix ending at position $\hat{R_i}$ there can 
be at most $\hat{R_i} - \hat{L_i}$ more occurrences of the $k$'th character than there are in the prefix ending 
at position $\hat{L_i}.$ Therefore, as before, 
we can determine  $prv(\hat{R_i})_k$ by binary search over the list 
$I[k][prv(\hat{L}_{i})_k], I[k][prv(\hat{L}_{i})_k +1], \dots, I[k][prv(\hat{L}_{i})_k + \hat{R_i} - \hat{L_i}].$  
Using the fact that $\hat{L_i} \geq \hat{R}_{i-1} - m,$ 
the desired  bound   follows. 

The last three observations imply 
$$\mathbb{A}_1(s,q) = 
O\left(\sigma  J \log m +  \sigma \sum_{i=1}^J   \log (\hat{R_i} - \hat{R}_{i-1} + m)  \right).$$ 

Notice that this is an overestimate, since line 7 is only executed if no occurrence was found after the current update
of $R$ (line 4). Standard algebraic manipulations using Jensen's inequality (see, e.g.~\cite{Jukna98}) yield 
$\sum_{i=1}^J   \log (\hat{R_i} - \hat{R}_{i-1} + m) \leq J \log \left(\frac{n}{J} + m \right).$ 
Therefore we obtain 

\begin{equation} \label{eq:jump_complx1}
\mathbb{A}_1(s,q) = O\left(\sigma J \log\left(\frac{n}{J} + m \right) \right).
\end{equation}

\subsubsection{Average case analysis of $J$}

The worst case running time of the Jumping Algorithm, in either implementation, is superlinear, since there exist strings
$s$ of any length $n$ and Parikh
vectors $q$ such that $J = \Theta(n)$: For instance, on the string $s=ababab\ldots ab$ and $q=(2,0)$, the algorithm
will execute $n/2$ jumps. 

This sharply contrasts with the experimental evaluation we present later. The Jumping Algorithm appears to have
in practice 
a sublinear behavior. In the rest of this section we provide an average case analysis of the running time of the Jumping
Algorithm leading to the conclusion that its expected running time is sublinear.  

\medskip

We assume that the string $s$ is given as a sequence of i.i.d.\ random variables uniformly distributed over the
alphabet $\Sigma.$ According to Knuth {\em et al.\ }~\cite{KnuthMP77}
``It might be argued that the average case taken over random strings is of little interest, since a user rarely searches for a 
random string. However, this model is a reasonable approximation when we consider those pieces of text that do not contain 
the pattern [\dots]''. The  experimental results we provide will show that this is indeed the case.

\medskip

Let us concentrate on the behaviour of the algorithm when scanning a (piece of the) 
string which does not contain a match. According to the above observation we 
can reasonably take this as a measure of the performance of the algorithm, considering that 
for any match found there is an additional step of size 1, which we can charge as the cost of the 
output.

Let $E_{m, \sigma}$ denote the expected value of the distance between $R$ and $L$, following an update of $R,$ i.e. if
$L$ is in position $i,$ then we are interested in the value 
$\ell$ such that $\textsc{firstfit}(prv(i) + q) = i+\ell.$ Notice that the probabilistic assumptions made  on the string,
together with the assumption of absence of matches, allows us to treat this value as independent of the position $i.$ We
will show the following result about $E_{m,\sigma}.$ For the sake of the clarity, we defer the proof of this technical
fact to the next section.
 
\begin{lemma} \label{lemma:avg_jump}
$E_{m, \sigma}  = \Omega\left(m + \sqrt{m \sigma \ln \sigma}\right).$
 \end{lemma}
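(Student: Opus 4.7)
The plan is to reformulate the quantity as the expectation of a maximum of waiting times and then split into two regimes depending on how concentrated the query vector $q$ is. By the i.i.d.\ uniformity of $s$ and the absence-of-match assumption (which lets us regard the suffix $s_{i+1}s_{i+2}\cdots$ as a fresh random string over $\Sigma$), the jump length $\ell$ for which $\ff(prv(i)+q) = i+\ell$ is distributed as $\max_{1\le k\le \sigma} T_k$, where $T_k$ is the position of the $q_k$-th occurrence of $a_k$ in an i.i.d.\ uniform random sequence. Since the gaps between consecutive occurrences of $a_k$ are i.i.d.\ geometric with parameter $1/\sigma$, we have $E[T_k] = \sigma q_k$ and $\mathrm{Var}(T_k) = \sigma(\sigma-1)q_k$, so $E_{m,\sigma} = E[\max_k T_k]$.

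The linear term is then essentially free: by Jensen's inequality $E[\max_k T_k] \ge \max_k E[T_k] = \sigma\max_k q_k$, and $\sum_k q_k = m$ forces $\max_k q_k \ge m/\sigma$, hence $E_{m,\sigma} \ge m$. For the additional $\sqrt{m\sigma\ln\sigma}$ term I would split on how concentrated $q$ is. In the \emph{concentrated case}, if $\max_k q_k \ge m/\sigma + c\sqrt{m\ln\sigma/\sigma}$ for a suitable absolute constant $c$, then the same Jensen bound already gives $E_{m,\sigma} \ge \sigma\max_k q_k \ge m + c\sqrt{m\sigma\ln\sigma}$. In the complementary \emph{balanced case}, all $q_k$ are at most $m/\sigma + c\sqrt{m\ln\sigma/\sigma}$, and (for $m$ not much smaller than $\sigma\ln\sigma$) at least $\Omega(\sigma)$ indices $k$ must carry non-trivial mass $q_k = \Theta(m/\sigma)$. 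Fix $\ell = m + c'\sqrt{m\sigma\ln\sigma}$ for a small $c' \in (0,\sqrt{2})$ and let $N_k(\ell) = |\{j \le \ell : s_j = a_k\}| \sim \mathrm{Bin}(\ell,1/\sigma)$. Then $\{\max_k T_k \le \ell\} = \{\forall k: N_k(\ell) \ge q_k\}$, and by the negative association of the multinomial coordinates,
\[
P\bigl(\forall k: N_k(\ell) \ge q_k\bigr) \;\le\; \prod_k P\bigl(N_k(\ell) \ge q_k\bigr).
\]
A Berry--Esseen / Gaussian left-tail estimate applied to each $\mathrm{Bin}(\ell,1/\sigma)$ around its mean $\ell/\sigma$ (with standard deviation $\Theta(\sqrt{m/\sigma})$) yields $P(N_k(\ell) \ge q_k) \le 1 - \Theta(\sigma^{-c'^2/2}/\sqrt{\ln\sigma})$ for each $k$ in the support, so the product of $\Omega(\sigma)$ such factors is at most $\exp\bigl(-\Omega(\sigma^{1 - c'^2/2}/\sqrt{\ln\sigma})\bigr) = o(1)$ when $c' < \sqrt{2}$. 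Hence $P(\max_k T_k > \ell) \ge 1/2$, and $E[\max_k T_k] \ge \ell/2 = \Omega(m + \sqrt{m\sigma\ln\sigma})$.

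The main obstacle is the balanced case: the $\sigma$ events $\{N_k(\ell) \ge q_k\}$ are coupled through the multinomial and independence is unavailable. The key technical input that bridges this gap is the negative association of multinomial coordinates, which yields a clean product upper bound and exposes the familiar $\sqrt{\ln\sigma}$ scaling characteristic of the maximum of $\sigma$ approximately Gaussian variables with standard deviation $\Theta(\sqrt{m/\sigma})$. The small-$m$ corner case (for instance $m \ll \sigma\ln\sigma$, where the Gaussian approximation degrades) would be handled either by widening the concentrated-case threshold or by direct estimation of small negative binomial tails.
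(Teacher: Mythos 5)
Your proposal is correct in outline but takes a genuinely different route from the paper for the hard (balanced) case. Both arguments share the reformulation of $E_{m,\sigma}$ as the expected completion time of a coupon collector with quotas, i.e., $E[\max_k T_k]$ with $T_k$ negative binomial, and both split on whether some $q_k$ exceeds $m/\sigma$ by $\Omega(\sqrt{m\ln\sigma/\sigma})$; in that concentrated regime your bound $E_{m,\sigma}\ge \sigma\max_k q_k$ is exactly the paper's Subcase 2.1. Where you diverge is the balanced regime: the paper treats it as a black box, citing the asymptotics of May's ``coupon collecting with quotas'' paper for exactly balanced $q$ and then reducing near-balanced $q$ to the balanced sub-vector $(q_\sigma,\dots,q_\sigma)$ by monotonicity, whereas you give a self-contained anti-concentration argument (negative association of multinomial counts to get the product bound, then a Gaussian left-tail estimate per coordinate). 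Your route buys independence from the external reference and makes the $\sqrt{\ln\sigma}$ scaling transparent as a max-of-near-Gaussians phenomenon; the paper's route is shorter but inherits whatever hypotheses May's asymptotics carry. One imprecision to repair: your per-coordinate tail bound $P(N_k(\ell)\ge q_k)\le 1-\Theta(\sigma^{-c'^2/2}/\sqrt{\ln\sigma})$ needs $q_k$ within $O(\sqrt{m\ln\sigma/\sigma})$ of $m/\sigma$ \emph{from below}, not merely $q_k=\Theta(m/\sigma)$ --- a coordinate with, say, $q_k=m/(2\sigma)$ contributes a factor $1-e^{-\Theta(m/\sigma)}$, which is useless. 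The fix is the same averaging argument you already gesture at: if every $q_k\le m/\sigma+c\sqrt{m\ln\sigma/\sigma}$ and $\sum_k q_k=m$, then at least $C\sigma/(c+C)$ indices satisfy $q_k\ge m/\sigma-C\sqrt{m\ln\sigma/\sigma}$, and restricting the product to those indices closes the argument. With that adjustment, and the acknowledged treatment of the $m=o(\sigma\ln\sigma)$ corner, your proof goes through.
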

 
At each iteration (when there is no match) the $L$ pointer is moved forward
to the farthest position from $R$  such that the Parikh vector of the substring between $L$ and $R$ is 
a sub-Parikh vector of $q.$ In particular, we can upper bound the distance between the 
new positions of $L$ and $R$ with $m.$  Thus  for the expected number of jumps performed by the 
algorithm, measured as the average number of times we move $L$, we have 
\begin{equation} \label{eq:avg_J}
\mathbb{E}[J] = \frac{n}{E_{m,\sigma} - m} = O\left(\frac{n}{\sqrt{m \sigma \ln \sigma}} \right).
\end{equation}
 
Recalling~\eqref{eq:jump_complx2} and ~\eqref{eq:jump_complx1}, and using~\eqref{eq:avg_J} 
for  a random instance we have the following result concerning the average case complexity of the Jumping Algorithm.

\begin{theorem}\label{thm:time_jumping}
Let $s\in \Sigma^*$ be fixed. Algorithm Jumping Algorithm finds all occurrences of a query $q$ 
\begin{enumerate}
\item in expected time $O(n(\frac{\sigma}{\log \sigma})^{1/2}\frac{\log m}{\sqrt m})$ using an
 inverted prefix table of size $O(n)$, which can be constructed in a preprocessing step in time $O(n)$;
\item in expected time $O(n(\frac{\sigma}{\log \sigma})^{1/2}\frac{1}{\sqrt{m}})$ using a wavelet 
tree of $s$ of size $O(n)$, which can be computed in a preprocessing step in time $O(n)$.
\end{enumerate}
\end{theorem}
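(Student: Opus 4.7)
The plan is to combine the per-query running-time bounds~\eqref{eq:jump_complx1} and~\eqref{eq:jump_complx2} with the expected-$J$ bound~\eqref{eq:avg_J}. The two preprocessing claims are already in hand: Algorithm Construct~$I$ clearly runs in one pass over $s$ in $O(n)$ time and space, and a wavelet tree on $s$ can be built in $O(n)$ time and $O(n)$ space by the standard construction of~\cite{GrossiGV03}. So the task reduces to taking expectations of the per-query bounds.

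Part~(2) is immediate: $\mathbb{A}_2(s,q) = O(\sigma J)$ is linear in $J$, so linearity of expectation together with~\eqref{eq:avg_J} gives
$$\mathbb{E}[\mathbb{A}_2(s,q)] \;=\; O(\sigma\,\mathbb{E}[J]) \;=\; O\!\left(\sigma \cdot \frac{n}{\sqrt{m\sigma\log\sigma}}\right) \;=\; O\!\left(n\left(\frac{\sigma}{\log\sigma}\right)^{1/2}\frac{1}{\sqrt{m}}\right).$$

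For part~(1), the per-query bound $O(\sigma J\log(n/J + m))$ contains the nonlinear factor $\log(n/J + m)$, so I cannot simply plug in $\mathbb{E}[J]$. My plan is to observe that the function $f(x) = x\log(n/x + m)$ is both concave and monotonically increasing on $x>0$: a direct computation gives $f''(x) = -n^2/(x(n+mx)^2) < 0$, and one checks $f'(x) > 0$ throughout (since $f'$ decreases from $+\infty$ down to $\log m$). Concavity then lets me apply Jensen's inequality to obtain $\mathbb{E}[J\log(n/J + m)] \leq \mathbb{E}[J]\log(n/\mathbb{E}[J] + m)$, and monotonicity lets me further replace $\mathbb{E}[J]$ inside by its upper bound from~\eqref{eq:avg_J}. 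In the natural regime $m = \Omega(\sigma\log\sigma)$, one has $n/\mathbb{E}[J] = O(\sqrt{m\sigma\log\sigma}) = O(m)$, so $\log(n/\mathbb{E}[J]+m) = O(\log m)$, giving
$$\mathbb{E}[\mathbb{A}_1(s,q)] \;=\; O\!\left(\sigma\,\mathbb{E}[J]\,\log m\right) \;=\; O\!\left(n\left(\frac{\sigma}{\log\sigma}\right)^{1/2}\frac{\log m}{\sqrt{m}}\right).$$

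The main obstacle is precisely this treatment of the logarithmic factor in part~(1): it requires both the concavity argument (to bring expectation inside via Jensen) and the monotonicity of $f$ (to substitute the upper bound on $\mathbb{E}[J]$). A minor technicality is the boundary regime $m = o(\sigma\log\sigma)$, where $n/\mathbb{E}[J]$ dominates $m$ inside the logarithm; here a case split shows that the resulting bound is still sublinear and dominated by the stated expression.
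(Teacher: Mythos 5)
Your proposal is correct and follows essentially the same route as the paper, which simply combines the per-query bounds \eqref{eq:jump_complx1} and \eqref{eq:jump_complx2} with the expected-jump bound \eqref{eq:avg_J} (the paper's entire ``proof'' is the one-line remark preceding the theorem). Your explicit concavity/monotonicity argument for pushing the expectation through the factor $\log(n/J+m)$, and your acknowledgement of the $m=o(\sigma\log\sigma)$ corner case, are refinements of a step the paper leaves entirely implicit, not a different approach.
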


We conclude this section by remarking once more that the above estimate obtained by the approximating probabilistic
automaton appears to be confirmed by the experiments. 

\subsubsection{The proof of Lemma \ref{lemma:avg_jump}}

We shall argue asymptotically with $m$ and according to whether or not the Parikh vector $q$ is balanced, and in the
latter case according to its degree of {\em unbalancedness}, measured as the magnitude of its largest and smallest components. 
\medskip

\noindent
{\em Case 1.}  $q$ is  balanced, i.e., 
$q = (\frac{m}{\sigma}, \dots, \frac{m}{\sigma}).$
Then, from equations (7) and (12) of \cite{May08}, it follows that  
\begin{equation} \label{eq:coupcoll}
E_{m, \sigma} \approx  m + \begin{cases} m2^{-m}{m \choose {m/2}} & \mbox{if } \sigma = 2, \cr
\sqrt{2m \sigma \ln \frac{\sigma}{\sqrt{2\pi}}} & \mbox{otherwise.}
\end{cases}
\end{equation}

The author of \cite{May08}  studied a variant of the well known coupon collector
problem in which the collector has to accumulate a certain number of copies of each coupon. It should not be hard 
to see that by  identifying the characters with the coupon types, the random string with the sequence of coupons obtained,
and the query Parikh vector with the number of copies we require for each coupon type, 
the expected time when the collection is finished is the same as our  $E_{m, \sigma}.$
It is easy to see that  (\ref{eq:coupcoll}) provides the claimed bound of Lemma~\ref{lemma:avg_jump}.

\medskip

\noindent
{\em Case 2.} $q = (q_1, \dots, q_{\sigma}) \neq (\frac{m}{\sigma}, \dots, \frac{m}{\sigma}).$
Assume, w.l.o.g., that  $q_{1} \geq q_{2} \geq \dots \geq q_{\sigma}.$ We shall argue by cases according to the magnitude of 
$q_{1}.$

\smallskip

\noindent
{\em Subcase 2.1.} Suppose $q_{1} = \frac{m}{\sigma} + \Omega\left(\sqrt{\frac{m \ln \sigma}{\sigma}}\right).$ 
Let us consider again the analogy with the
coupon collector who has to collect $q_i$ copies of coupons of type $i,$ with $i=1, \dots, \sigma.$
Clearly the collection is not completed until the $q_{1}$'th copy of the coupon of type $1$ has been collected. 
We can model the collection of these type-1 coupons 
as a sequence of Bernoulli trials with probability of success $1/\sigma.$  
The expected waiting time until the $q_1$'th success is $\sigma q_1$ and 
from the previous observation this is also a lower bound on $E_{m, \sigma}.$ Thus,   
$$E_{m,\sigma} \geq \sigma q_1 = \sigma \left( \frac{m}{\sigma} + 
\Omega \left(\sqrt{\frac{m \ln \sigma}{\sigma}}\right) \right) = 
\Omega\left(m+ \sqrt{m \sigma \ln \sigma}\right),$$
which confirms the bound claimed, also in this case.

\smallskip

\noindent
{\em Subcase 2.2.} Finally, assume that $q_{1} = \frac{m}{\sigma} + o\left(\sqrt{\frac{m \ln \sigma}{\sigma}}\right).$
Then, for the smallest component $q_{\sigma}$ of $q$ we have 
$q_{\sigma} \geq m - (\sigma - 1) q_{1} = \frac{m}{\sigma} - o\left(\sqrt{m \sigma\ln \sigma}\right).$
Consider now the balanced Parikh vector $q' = (q_{\sigma}, \dots, q_{\sigma}).$ We have that $q'\leq q$ and 
 $|q'| = \sigma q_{\sigma}.$ By the analysis of Case 1., above, on balanced Parikh vectors, and observing that
collecting $q$ implies collecting $q'$ also, it follows that

\begin{eqnarray*}
E_{m, \sigma} & \geq & E_{\sigma q_{\sigma}, \sigma} \\
&=& \Omega\left( \sigma q_{\sigma} + \sqrt{\sigma^2 q_{\sigma} \ln \sigma} \right)\\
&=& \Omega\left( 
\sigma \left(\frac{m}{\sigma} - o\left(\sqrt{m\sigma \ln \sigma} \right)\right)+
\sqrt{\sigma^2 \left(
\frac{m}{\sigma} - o\left(\sqrt{m \sigma\ln \sigma}\right)
\right)
\ln \sigma}
\right) \\
&=& \Omega\left( 
m -  o\left(\sigma\sqrt{m \sigma \ln \sigma}\right) +
\sqrt{m\sigma \ln \sigma 
- o\left(\sigma^2 \ln \sigma \sqrt{m \sigma\ln \sigma}\right)
}
\right),
\end{eqnarray*}
in agreement with the bound claimed. This completes the proof.

\subsection{Simulations}

We implemented the Jumping Algorithm in C++ in order to study the number of jumps $J$. We ran it on random strings of
different lengths and over different alphabet sizes. The underlying probability model is an i.i.d. model with uniform
distribution. We sampled random query vectors with length between $\log n$ ($=\log_2 n$) and $\sqrt{n}$, where $n$ is the
length of the string. Our queries were of one of two types:

\begin{enumerate}
\item Quasi-balanced Parikh vectors: Of the form $(q_1,\ldots,q_{\sigma})$ with $q_i\in (x-\epsilon, x+\epsilon)$, and
$x$ running from $\log n / \sigma$ to $\sqrt{n}/\sigma$. For simplicity, we fixed $\epsilon=10$ in all our
experiments, and sampled uniformly at random from all quasi-balanced vectors around each $x$.
\item Random Parikh vectors with fixed length $m$. These were sampled uniformly at random from the space of all
Parikh vectors with length $m$.
\end{enumerate}

The rationale for using quasi-balanced queries is that those are clearly worst-case for the number of jumps $J$, since
$J$ depends on the shift length, which in turn depends on $\textsc{firstfit}(prv(L)+q)$. Since we are searching in a random string
with uniform character distribution, we can expect to have minimal $\textsc{firstfit}(prv(L)+q)$ if $q$ is close to 
balanced, i.e.\ if all entries $q_i$ are roughly the same. This is confirmed by our experimental results which show
that $J$ decreases dramatically if the queries are not balanced (Fig.~\ref{fig:simulations2}, right).

We ran experiments on random strings over different alphabet sizes, and observe that our average case analysis agrees
well with the simulation results for random strings and random quasi-balanced query
vectors. Plots for $n=10^5$ and $n=10^6$ with alphabet sizes $\sigma = 2,4,16$ resp.\
$\sigma = 4,16$ are shown in Fig.~\ref{fig:simulations1}.

In Fig.~\ref{fig:scott}  we show comparisons between the running time of the Jumping algorithm and that of the 
simple window algorithm.  
The simulations over random strings and Parikh vectors of different sizes appear to perfectly 
agree with the guarantees provided by our asymptotic analyses. This is of particular importance from the 
point of view of the applications, as it shows that the complexity analysis does not hide big constants. 

\begin{figure}
\begin{center}
\begin{minipage}{6cm}
\includegraphics[scale=0.5]{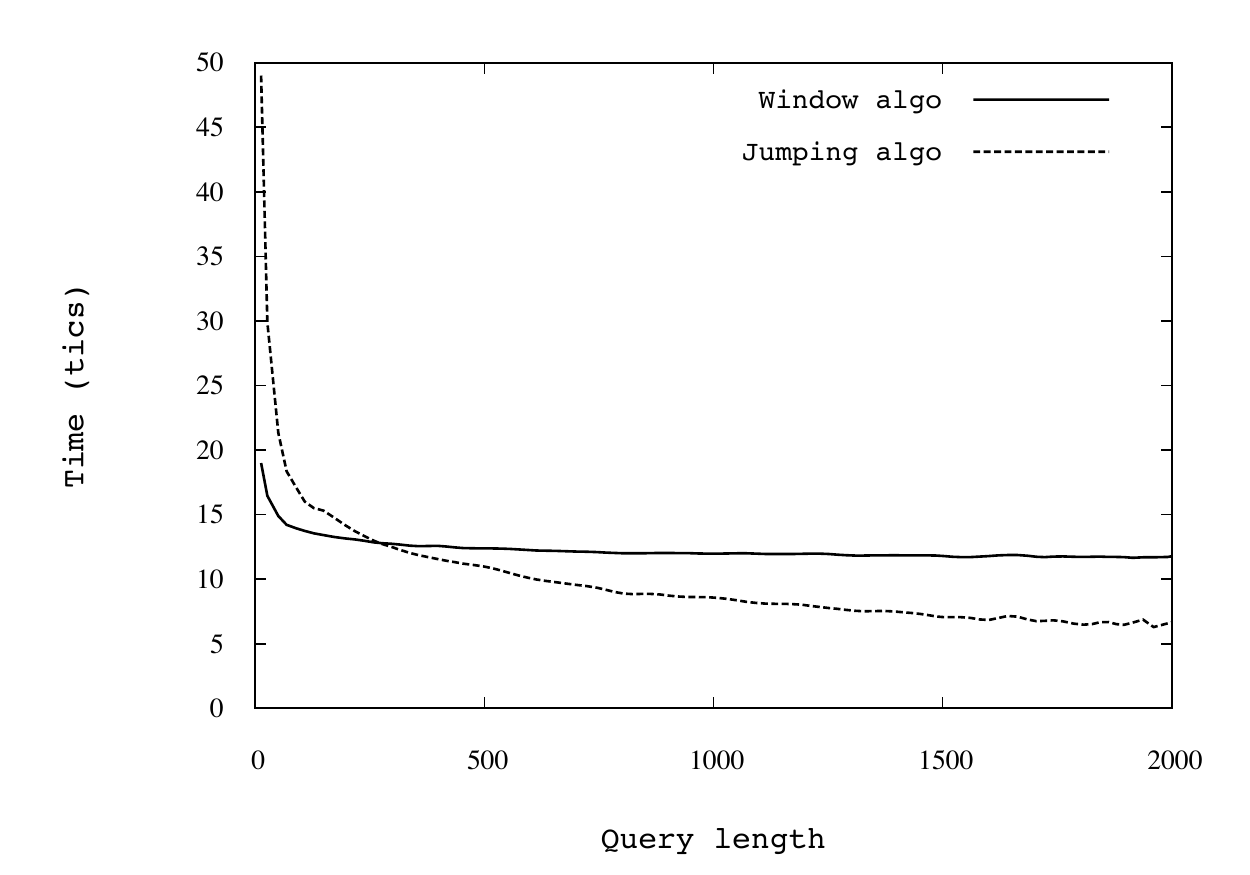}
\end{minipage}
\begin{minipage}{6cm}
\includegraphics[scale=0.5]{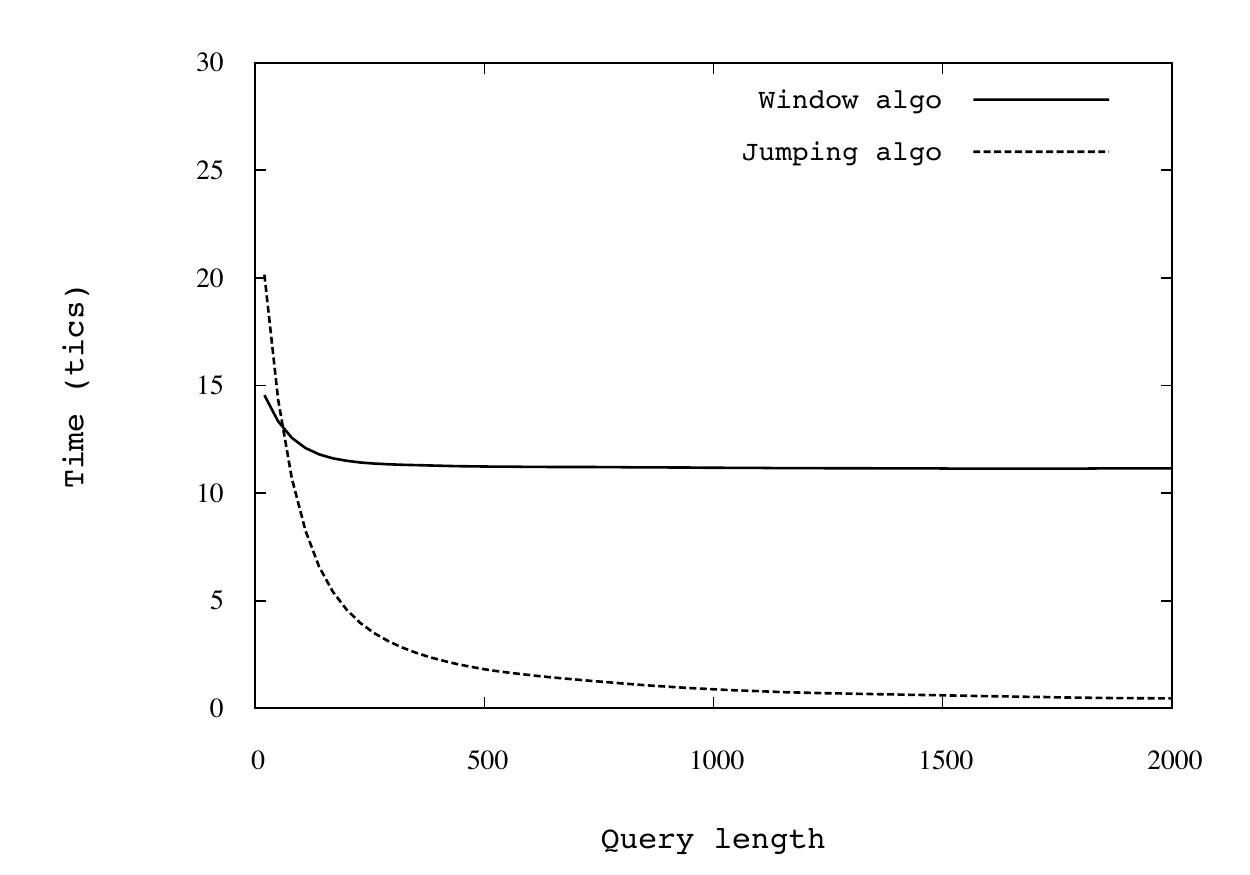}
\end{minipage}
\end{center}
\caption{\label{fig:scott}Running time comparisons between the Jumping Algorithm and the window algorithm. 
The text is a random string (uniform i.i.d.) of size $9\:000\:000$ from a four letter alphabet. Parikh vectors of different sizes between 10 and 2000 were 
randomly generated and the results averaged over all queries of the same size. On the left are the results for quasi-balanced Parikh vectors (cf.~text). 
On the right  are the results for random Parikh vectors.}
\end{figure}

To see how our algorithm behaves on non-random strings, we downloaded human DNA sequences from GenBank~\cite{genbank}
and ran the Jumping Algorithm with random quasi-balanced queries on them.
We found that the algorithm performs 2 to 10 times fewer jumps on these DNA strings than on random strings of the same
length, with the gain increasing as $n$ increases. We show the results on a DNA sequence of $1$ million bp (from
Chromosome 11) in comparison with the average over 10 random strings of the same length (Fig.~\ref{fig:simulations2},
left).

\begin{figure}
\begin{center}
\begin{minipage}{6cm}
\includegraphics[scale=0.5]{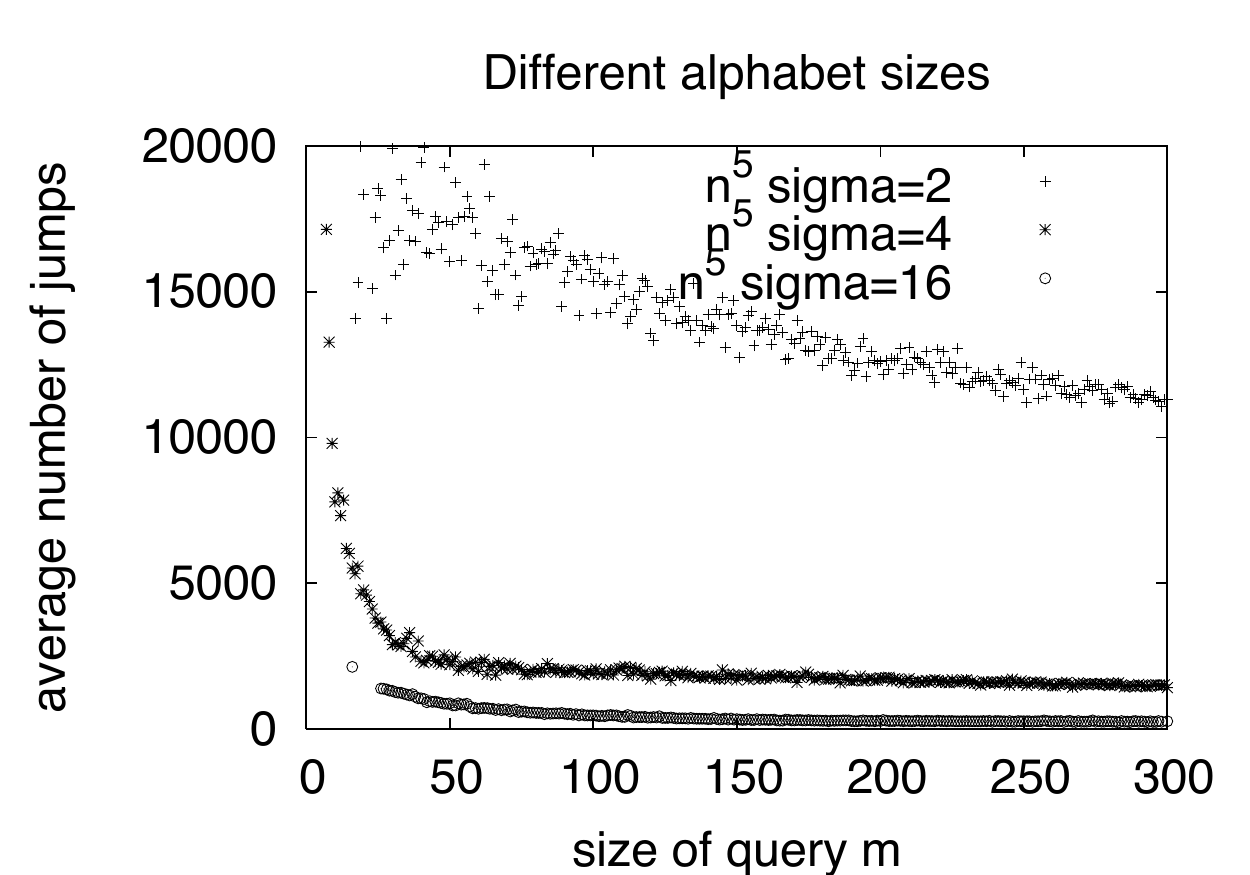}
\end{minipage}
\begin{minipage}{6cm}
\includegraphics[scale=0.5]{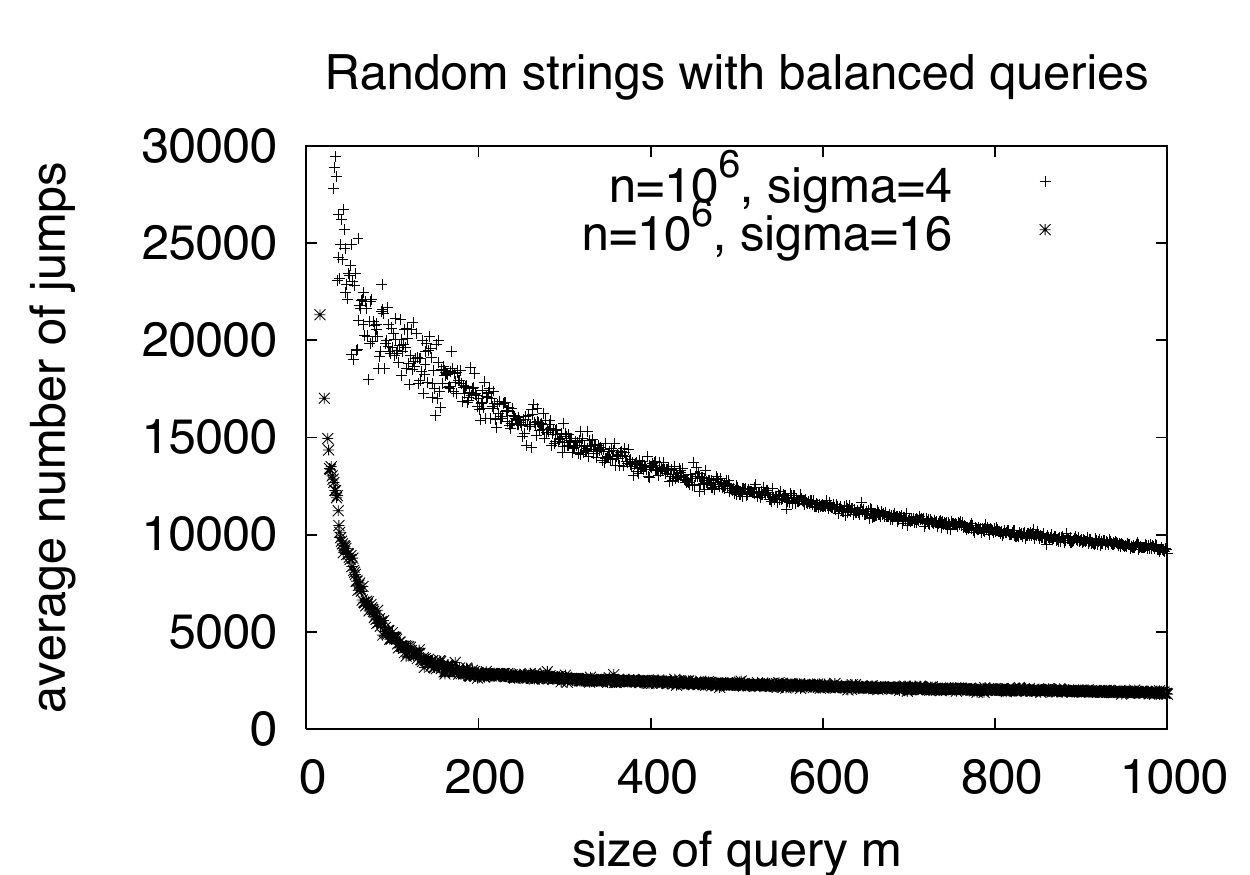}
\end{minipage}
\end{center}
\caption{\label{fig:simulations1}Number of jumps for different alphabet sizes for random strings of size $100\:000$
(left) and $1\: 000\: 000$ (right). All queries are randomly generated quasi-balanced Parikh vectors (cf.\ text). Data
averaged over 10 strings and all random queries of same length.}
\end{figure}

\begin{figure}
\begin{center}
\begin{minipage}{6cm}
\includegraphics[scale=0.5]{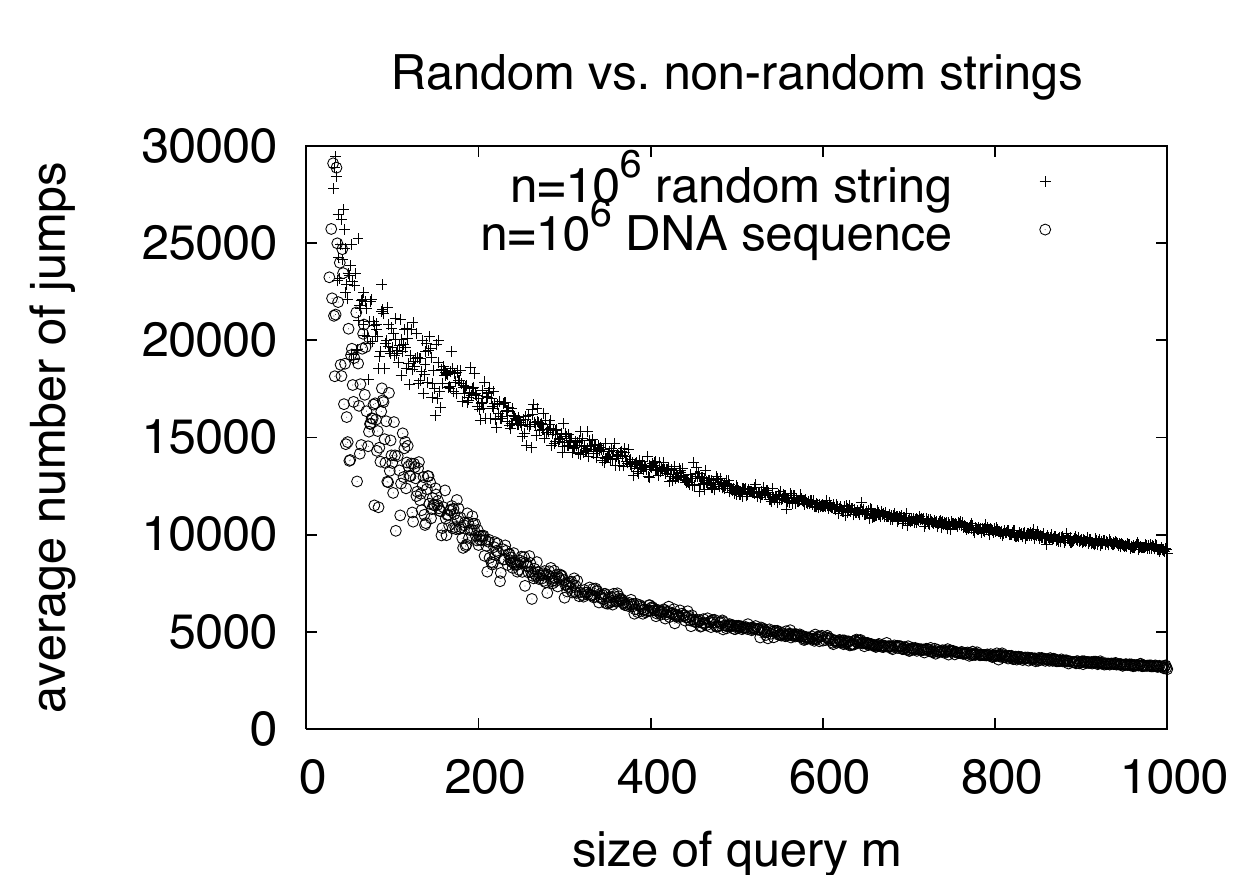}
\end{minipage}
\begin{minipage}{6cm}
\includegraphics[scale=0.5]{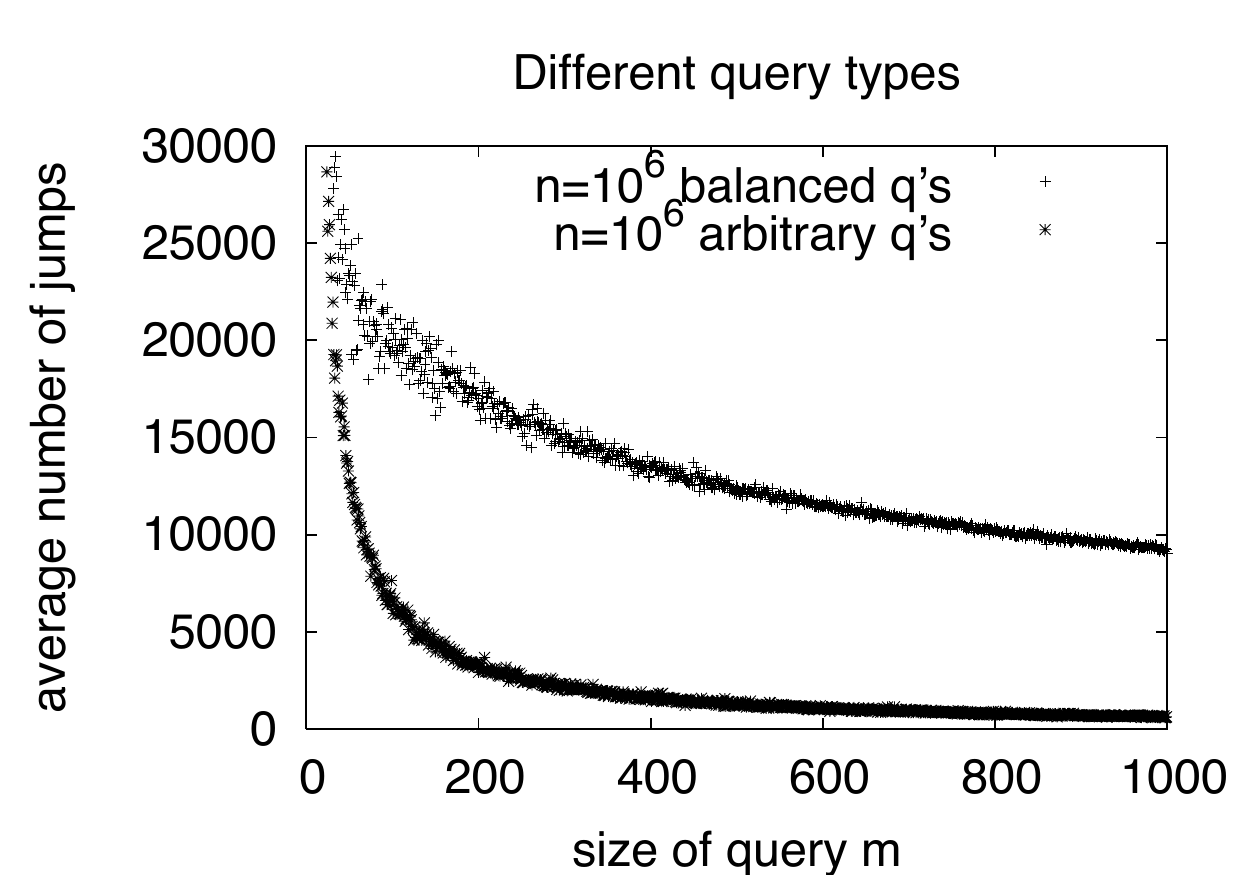}
\end{minipage}
\end{center}
\caption{\label{fig:simulations2}Number of jumps in random vs.\ nonrandom strings: Random strings over an alphabet
of size $4$ vs.\ a DNA sequence, all of length $1\: 000 \: 0000$, random quasi-balanced query vectors. Data
averaged over 10 random strings and
all queries with the same length (left). Comparison of quasi-balanced vs.\ arbitrary query vectors over random strings, 
alphabet size $4$, length $1\:000\:000$, 10 strings. The data shown are averaged over all queries with same length
$m$ (right).}
\end{figure}

\section{Conclusion}

Our simulations appear to confirm that in practice the performance of the Jumping Algorithm is well predicted by the
average case analysis we proposed. A more precise 
analysis is needed, however. Our approach seems unlikely to lead to any refined average case analysis since that would imply
improved results for the intricate variant of the coupon collector problem of~\cite{May08}.

Moreover, in order to better simulate DNA or other biological data, random string models other
than uniform i.i.d.\ should also be analysed, such as first or higher order Markov chains.

We remark that our wavelet tree variant of the Jumping Algorithm, which uses rank/select operations only,
opens a new perspective on the study of Parikh vector matching. We have made another family of approximate pattern matching
problems accessible to the use of self-indexing data structures~\cite{NavMaek07}. We are particularly interested in
compressed data structures which allow
fast execution of rank and select operations, while at the same time using reduced storage space for the text. Thus, 
every step forward in this very active area can provide improvements for our problem.

\subsection*{Acknowledgements}
We thank Gonzalo Navarro for fruitful discussions.

\bibliographystyle{abbrv}

\end{document}